\documentclass{llncs}

\usepackage[ansinew]{inputenc}
\usepackage{graphicx,graphics,epsf,amssymb}
\usepackage{colortbl}
\usepackage{multirow}

\graphicspath{{fig/}} 

\RequirePackage{lineno}

\def\R{\mathbb{R}}
\def\eps{\varepsilon}

\newcommand{\x}[1]{x({#1})}
\newcommand{\y}[1]{y({#1})}

\newcommand{\entry}{\mathrm{En}}

\newcommand{\cen}[1]{Center({#1})}
\newcommand{\ball}[1]{B({#1})}

\begin{document}
\title{The 1-Center and 1-Highway problem}

\author{Jos\'{e} Miguel D\'{\i}az-B\'{a}\~{n}ez$^1$\fnmsep\thanks{J.M. D.-B., P. P.-L. and I. V. were partially supported by project FEDER MEC MTM2009-08652. J.M. D.-B., M.K and I. V. were partially supported by ESF EUROCORES programme EuroGIGA, CRP ComPoSe: grant EUI-EURC-2011-4306.},
Matias Korman$^2$\fnmsep\thanks{Partially supported by the support of the Secretary for Universities and Research of the Ministry of Economy and Knowledge of the Government of Catalonia and the European Union.},
Pablo P\'erez-Lantero$^{3}$\thanks{Partially supported by grant FONDECYT 11110069.},
Inmaculada Ventura$^{1}$\fnmsep$^*$}

\institute{ 
Departamento de Matem\'{a}tica Aplicada II, Universidad de Sevilla, Spain. {\tt \{dbanez,iventura\}@us.es}.
\and
Universitat Polit\`ecnica de Catalunya (UPC), Barcelona. {\tt matias.korman@upc.edu}.
\and
Escuela de Ingenier\'{i}a Civil en Inform\'{a}tica, Universidad de Valpara\'{i}so, Chile. {\tt pablo.perez@uv.cl}. }

%

\maketitle

\begin{abstract}
We study a variation of the $1$-center problem in which, in addition to a single supply facility, we are allowed to locate a highway. This highway increases the transportation speed between any demand point and the facility. That is, given a set $S$ of points and $v>1$, we are interested in locating the facility point $f$ and the highway $h$ that minimize the expression $\max_{p\in S}d_{h}(p,f)$, where $d_h$ is the time needed to travel between $p$ and $f$. We consider two types of highways ({\em freeways} and {\em turnpikes}) and study the cases in which the highway's length is fixed by the user (or can be modified to further decrease the transportation time).
%
\end{abstract}

\textit{Keywords:} Geometric optimization; Facility location; Time metric.

\section{Introduction}
The optimal location of a facility modeled as a geometric object is a well-studied problem both in operations research and computational geometry. An overview on continuous location is given in \cite{plastria}. Particularly, the geometrical nature of problems under the minmax criterion has led to fruitful interaction between both fields \cite{hamacher,godfried}. On the other hand, models dealing with alternative transportation systems have been suggested in location theory \cite{mesa}.

Although the metric given by a real urban transportation system is often quite complicated, simplified mathematical models have been widely studied in order to investigate basic geometric properties of urban transportation systems. Abellanas {\em et. al.}~\cite{ahiklmps-vdsnh-03} considered a geometric modeling of this environment: represent highways as line segments in the plane, giving each line segment an associated speed. Then, the travel time between two points gives a metric called the {\em city distance}. Recently, there has been an interest in facility location problems derived from urban modeling. In many cases we are interested in locating a highway that optimizes some given function that depends on the distance between elements of a given pointset (see for example \cite{cardinal08,kt-oishcm-08}).

In a recent paper,  Espejo and Ch\'{i}a~\cite{espejo11} introduced a variant of the problem in which we are given a set of clients  (represented by a set of points $S$) located in a city and one is interested in locating a  service facility and a highway simultaneously in a way that the average supply time between the clients and the supply point is minimized.
The state of the art and applications of this problem can be found in their paper and the references therein.
Unfortunately, it was shown that their algorithm could give an incorrect solution in some cases and a new corrected algorithm is given in~\cite{bklv-lsfrtl-11}. In this paper we study a variation of this problem in which, instead of the average travel time, we want to minimize the largest travel time between the clients and the facility.

\subsection{Definitions and notation}
Let $S$ be the set of $n$ client points, $f$ be the service facility point, $h$ be the highway, represented by a segment whose endpoints are $t$ and $t'$. Given a point $u$ of the plane, let $\x{u}$ and $\y{u}$ denote respectively the $x$ and $y$ coordinates of $u$. For simplicity in the explanation, we assume that no two point of $S$ share an $x$ or $y$ coordinate, but we note that our algorithms work for a the case in which points do not have this general position assumption.  Also, let $\ell$ be the length of $h$, and $v>1$ be its speed. 

\begin{figure}[h]
    \centering
    \includegraphics[width=0.9\textwidth]{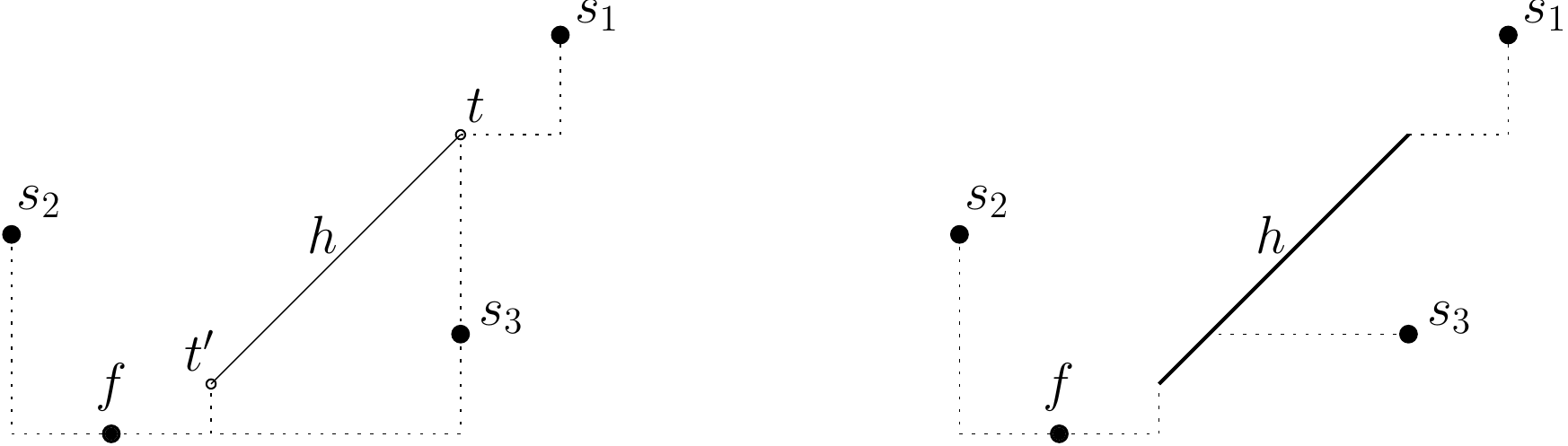}
    \caption{\small{The distance model using both turnpikes (left, denoted by a thin segment and its two endpoints) and freeways (right, denoted by a thick segment); in the first case $s_1$ uses the turnpike from $t$ to $t'$ in order to reach $f$ faster. The turnpike does not speed up transportation between $s_2$ and $f$, hence is not used by $s_2$. Demand point $s_3$ however, can either walk or use the turnpike to reach $f$, and will need the same time in both cases. In the right we have the exact same instance, but the highway is now a freeway. Observe that in this case, point $s_3$ will take profit from the freeway.}}
    \label{fig:distance}
\end{figure}

Two different types of highways have been considered in the literature: {\em freeways} \cite{aap-qpsscvd-04,kt-oishcm-08} and {\em turnpikes} \cite{cardinal08,bae09}. The difference between a freeway and a turnpike is that we can only enter or exit from a turnpike at its endpoints, while we can enter and exit from a highway from any point on it. We will use the term highway to refer to either type.

Fixed the location of the facility and the endpoints $t,t'$ of the turnpike, the distance from a demand point $p\in S$ to $f$ is defined as $d_h(p,f)=\min\{\|p-f\|_1,\|p-t\|_1+\frac{\ell}{v}+\|t'-f\|_1, \|p-t'\|_1+\frac{\ell}{v}+\|t-f\|_1\}$, where $||\cdot||_1$ represents the $L_1$ distance between two points. That is, the minimum time between either walking, using the highway in one direction or using it in the reverse direction. When the highway is a freeway, the distance is expressed as $d_h(p,f)=\min\{\|p-f\|_1,\min_{s,s'\in h} \|p-s\|_1+\frac{\|s-s'\|_1}{v}+\|s'-f\|_1\}$ instead, see Figure \ref{fig:distance}. Whenever $d_h(p,f)<\|p-f\|_1$, we say that $p$ uses the highway to reach $f$. Otherwise, we say that $p$ walks (or does not use $h$) to reach the facility.


The problem we study can be formulated as the 
\begin{quote}
{\bf The 1-Center and 1-Highway problem (1C1H problem)}: Given a set $S$ of $n$ points and a fixed speed $v>1$, locate a point (facility) $f$ and a segment highway $h$ of any orientation whose endpoints are $t$ and $t'$ such that the function $\max_{p\in S}d_{h}(p,f)$ is minimized.
\end{quote}

We consider the variants in which the highway's length $\ell=||t-t'||_2$ is fixed (and we call it the {\em fixed length} problem) or the highway can have any length ({\em variable length}). We will also consider the cases in which the highway to locate can be either a turnpike or a freeway. Thus,  in total we have four variants of the problem which we denote by FL-1C1F (fixed length 1-center 1-freeway problem), VL-1C1T (variable length 1-center 1-turnpike problem), and so on. In this paper we propose efficient algorithms to solve the four variants of the problem. 


\section{Locating a Turnpike}

In this section, we give a general algorithm to solve both variants of the 1C1H problem.
We will then propose an improved method for the VL-1C1H problem.

\medskip

It is easy to see that, when locating a turnpike, the highway will only be used in one direction.
By using similar arguments to those given in ~\cite{espejo11} (Lemma 2.1) we can prove that there always exists an optimal location in which one of the turnpike  endpoints coincides with the facility.
Therefore, throughout this section we assume that $f=t'$ thus the distance from a demand point $p\in S$ to $f$ is now $d_h(p,f)=\min\{\|p-f\|_1,\|p-t\|_1+\frac{\ell}{v}\}$.

\medskip
Using the standard transformation from $L_1$ to $L_{\infty}$, we solve the problem using $L_{\infty}$ instead.
Let $f^*$ and $h^*$ be an optimal solution of a given problem instance. Let $t^*$ be the endpoint of $h^*$ other than $f^*$ and let $R^*=\max_{p\in S}d_{h^*}(p,f^*)$.
Let $r_1$ be the maximum of $d_{h^*}(p,f^*)$ among all points $p\in S$ not using $h^*$,
and $r_2$ be the maximum of $d_{h^*}(p,f^*)-\ell/v$ among all points $p\in S$ that use $h^*$.
Let $\ball{u,r}$ denote the ball of center $u$ and radius $r$ with respect to the $L_{\infty}$-metric.
Note that $S$ is covered by the balls $\ball{f^*,r_1}$ and $\ball{t^*,r_2}$, and
$R^*=\max\{r_1,r_2+\ell/v\}$ is satisfied.
Furthermore, either $r_1$ or $r_2$ can be increased, without affecting the value $R^*$ of the solution, so that
$r_1=r_2+\ell/v$.
From these observations, the following
statement can be obtained: The 1C1H problem is equivalent to finding two balls, $\ball{f,R}$ and $\ball{t,R-\ell/v}$, such that $R$ is minimum and $\ball{f,R}\cup\ball{t,R-\ell/v}$ covers $S$.
%


%

%

\medskip

We partition the pointset $S$ into two sets $W^*$ and $H^*$ as follows: set $W^*$ contains the points whose $L_\infty$ distance to $f^*$ is at most $R^*$. The set $H^*=S\setminus W^*$ contains the points that must use the highway to reach $f^*$ in $R^*$ or less units of time.

Observe that we cannot have $W^*=\emptyset$, since by reversing the positions of $f^*$ and $t^*$ we would obtain a better solution. Notice that the set $H^*$ can be empty (for example, if we are forced to locate an extremely long highway). However, this case can be easily detected and treated, since $f^*$ is the solution of the rectilinear 1-center problem~\cite{drezner87} (which can be computed in linear time).
 Hence, from now on we assume that neither $W^*$ nor $H^*$ is empty.

We consider the next problem called the \emph{basic problem}: Given a partition $\{W,H\}$ of $S$, find the smallest value $R$ (called the {\em radius} of the partition) and the coordinates of $f$ and $t$ such that $W\subseteq \ball{f,R}$ and $H\subseteq \ball{t,R-\ell/v}$. When we consider the fixed-length variation of the problem, we also add the constraint that $f$ and $t$ must satisfy $\|f-t\|_2=\ell$. Since $f^*$ and $t^*$ are optimal, it is easy to see that they are the solution of the basic problem for the partition $\{W^*,H^*\}$. Moreover, the radius of any other partition of $S$ will have equal or higher radius than $R^*$.

Our algorithm works as follows: we consider different partitions of $S$ and solve the basic problem associated to each partition. We identify $\{W^*,H^*\}$ as the partition whose radius is smallest. A naive method would be to guess the partition $\{W^*,H^*\}$ among the $O(2^n)$ candidates. In the following we reduce the search space to one of polynomial size:

\begin{figure}[h]
    \centering
    \includegraphics[width=0.9\textwidth]{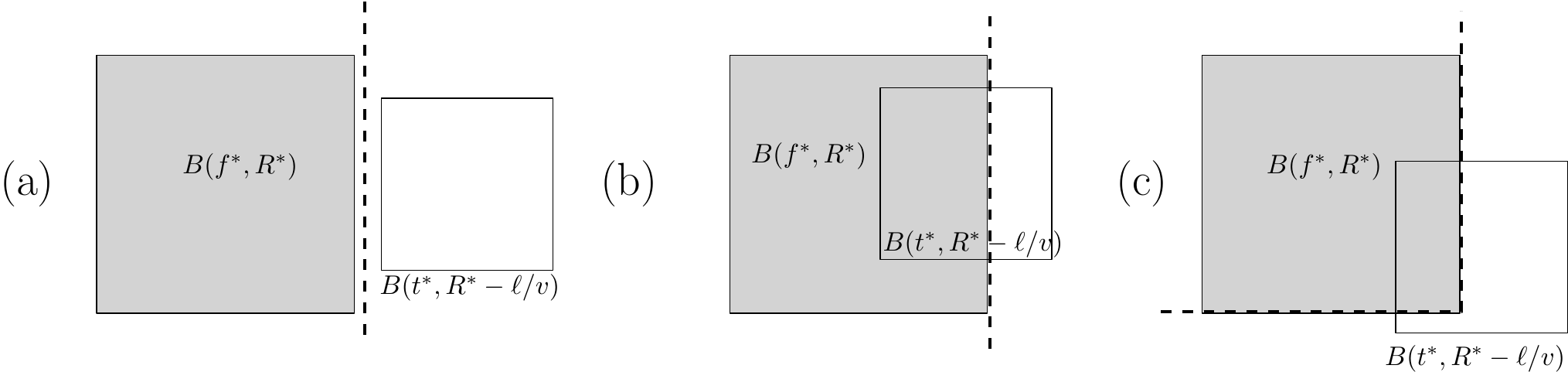}
    \caption{\small{Relative positions of the balls $\ball{f^*,R^*}$ and $\ball{t^*,R^*-\ell/v}$. For each of the cases the sets $W^*$ (marked in grey) can be split from $H^*$ with either an axis-aligned line or an upper-left quadrant.}}
    \label{fig:cases}
\end{figure}
\begin{lemma}\label{lem_key}
For any set $S$, the partition $\{W^*,H^*\}$ can be found among $O(n^2)$ candidates
\end{lemma}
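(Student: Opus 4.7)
My plan is to analyze how the two $L_\infty$-balls in the optimal solution sit with respect to each other and show that the induced partition $\{W^*,H^*\}$ is always realised by a simple coordinate-combinatorial separator (an axis-aligned line or an axis-aligned quadrant), of which there are only $O(n^2)$.

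First, I would make the following geometric observation. Since $\ell>0$ and $v>1$, the smaller ball $\ball{t^*,R^*-\ell/v}$ is a square of side length strictly smaller than the larger ball $\ball{f^*,R^*}$. Therefore the smaller square cannot stick out of two opposite sides of the larger square simultaneously: if it did, its width would exceed $2R^*$, contradicting its radius. So the portion of the smaller square lying outside the larger one can only protrude across either a single side or two adjacent sides of the larger square. These are precisely the cases depicted in Figure~\ref{fig:cases}.

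Next I would do the case analysis. In the \emph{one-side} case, suppose for concreteness that the protruding side is the right side of $\ball{f^*,R^*}$, at abscissa $x_r$. Every point of $W^*$ satisfies $x(\cdot)\le x_r$ and every point of $H^*$ satisfies $x(\cdot)>x_r$, so a vertical line separates $W^*$ from $H^*$. Only the combinatorial position of this line relative to the points of $S$ matters, so we obtain $O(n)$ candidate partitions per side and $O(n)$ in total. In the \emph{two-adjacent-sides} case, say the top and right sides meeting at the corner $c=(x_r,y_t)$, the set $W^*$ lies inside the lower-left quadrant $\{(x,y):x\le x_r,\ y\le y_t\}$ while $H^*$ lies outside it (and inside the smaller square). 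Thus $W^*$ and $H^*$ are separated by the boundary of an axis-aligned quadrant with apex $c$; by shifting $c$ while preserving the split, I can assume that the two coordinates defining $c$ come from (or lie between consecutive) $x$- and $y$-coordinates of $S$. This gives $O(n^2)$ candidate partitions per pair of adjacent sides, hence $O(n^2)$ in total. Combining both cases yields $O(n^2)$ candidate partitions overall, and $\{W^*,H^*\}$ is necessarily one of them.

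The main obstacle I anticipate is the bookkeeping in the two-adjacent-sides case: I have to argue that for every relevant corner position, only its combinatorial type with respect to $S$ (which points fall in each of the four quadrants it defines) matters, and that the optimal corner $c$ can be snapped to one taken from the $O(n^2)$ coordinate-pairs of $S$ without changing $\{W^*,H^*\}$. A secondary subtlety is the treatment of points lying exactly on the boundary of $\ball{f^*,R^*}$ (i.e.~on the separating line or on the sides of the separating quadrant); these ties can be resolved by a standard general-position/perturbation argument, which is harmless because the lemma only asks for the existence of a polynomial-size family containing the optimal partition.
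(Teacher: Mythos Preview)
Your proposal is correct and follows essentially the same approach as the paper: both arguments observe that the smaller ball $\ball{t^*,R^*-\ell/v}$ can only protrude past one side or two adjacent sides of the larger ball $\ball{f^*,R^*}$, yielding either an axis-aligned line or an axis-aligned quadrant as separator, and hence $O(n)+O(n^2)=O(n^2)$ combinatorial partitions. Your write-up is in fact slightly more explicit than the paper's (which merely appeals to a figure for the three relative positions), and your anticipated ``obstacles'' about snapping the quadrant apex to coordinate pairs and handling boundary ties are exactly the right bookkeeping points---both are routine, as you note.
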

\begin{proof}
Without loss of generality we can assume that $f^*$ is above and to the left of $t^*$. It is then easy to see that there are three possible relative positions of the two balls. In each of these cases the two sets can be split by either an axis-aligned line ( cases (a) and (b) in Figure~\ref{fig:cases}) or an upper left quadrant (case (c) in  Figure~\ref{fig:cases}).
%
%
Each possible partition is uniquely determined by the points belonging to the closed left (resp. North-West) halfplane (resp. quadrant)  and the remaining points.
Since there are $O(n^2)$ different cases, hence the Lemma is shown.
\qed

\end{proof}

Given a set $T$ of points, let  $X(T)\subseteq T$ be the set containing the points with highest and lowest $x$ and $y$ coordinate of $T$ (this set is called the set of {\em extreme points} of $T$). We define $\delta(T)$ as one half of the largest $L_\infty$ distance between any two points of $T$. That is, the  minimum radius needed so that all points of $T$ can be included in some $L_{\infty}$ ball. Observe that $|X(T)|\leq 4$ and that $\delta(T)=\delta(X(T))$.  For any real number $r>0$, let $\cen{T,r}$ be the locus of the centers of the axis-parallel squares of radius $r$ (i.e., $L_\infty$ balls) that cover $T$. A similar definition can be seen in~\cite{huang03}.

\begin{lemma}\label{lem_propcen}
For any set $S$ of points the following properties hold:
\begin{itemize}
    \item $\cen{S,r}$ can be empty, a point, an axis-parallel segment, or a rectangle.
    \item $\cen{S,r}=\emptyset$ if and only if $r<\delta(S)$.
    \item $\cen{S,r}=\cen{X(S),r}$. 
    \item For any $\eps>0$ and $r\geq \delta(S)$, $\cen{S,r}\subset\cen{S,r+\eps}$. Moreover, the separation between the boundaries of $\cen{S,r}$ and $\cen{S,r+\eps}$ is  $\eps$.
\end{itemize}
\end{lemma}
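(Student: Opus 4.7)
The plan is to derive a closed-form expression for $\cen{S,r}$ as an axis-parallel rectangle, from which all four properties will be essentially immediate. First I would observe that a point $c$ belongs to $\cen{S,r}$ if and only if the axis-parallel square of radius $r$ centered at $c$ covers every $p\in S$, equivalently $\|c-p\|_\infty\leq r$ for all $p\in S$, equivalently $c\in \ball{p,r}$ for every $p\in S$. Hence $\cen{S,r}=\bigcap_{p\in S}\ball{p,r}$. Since each $\ball{p,r}$ is the axis-parallel square $[\x{p}-r,\x{p}+r]\times[\y{p}-r,\y{p}+r]$, intersecting the horizontal and vertical factors independently yields
\[
\cen{S,r}=\Big[\max_{p\in S}\x{p}-r,\ \min_{p\in S}\x{p}+r\Big]\times\Big[\max_{p\in S}\y{p}-r,\ \min_{p\in S}\y{p}+r\Big],
\]
writing $x_{\max},x_{\min},y_{\max},y_{\min}$ for the four extreme coordinates appearing in this formula.

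Each of the four properties would then follow by inspection. For the first, the expression is always an axis-parallel rectangle, possibly empty or degenerate to a point or a segment according to whether one or both of the two interval factors collapses. For the second, nonemptiness is equivalent to both $r\geq (x_{\max}-x_{\min})/2$ and $r\geq (y_{\max}-y_{\min})/2$; since the largest $L_\infty$-distance between pairs of points of $S$ is realized in one of the two coordinate directions, the conjunction of these inequalities is exactly $r\geq \delta(S)$. For the third, the four extremes in the formula depend only on the set of extreme points, so the same formula produces the same rectangle for $S$ and for $X(S)$. For the fourth, replacing $r$ by $r+\eps$ subtracts $\eps$ from each lower endpoint and adds $\eps$ to each upper endpoint, which expands the rectangle by $\eps$ on every side; the hypothesis $r\geq \delta(S)$ ensures that the starting rectangle is nonempty, so this expansion is strict, and the perpendicular offset between corresponding sides of $\cen{S,r}$ and $\cen{S,r+\eps}$ is exactly $\eps$.

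I do not expect any step to present a real obstacle, as the whole argument boils down to recognizing $\cen{S,r}$ as an intersection of axis-parallel squares. The only care required will be to enumerate the degenerate shapes in property (1) (the two interval factors can independently be a proper interval, a singleton, or empty) and to use the nonemptiness guarantee provided by $r\geq \delta(S)$ when asserting the strictness of the containment in property (4).
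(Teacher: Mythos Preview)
Your proposal is correct and is precisely the kind of argument the paper has in mind: the paper omits the proof entirely, remarking only that the properties follow from the definition of $\cen{S,r}$ and the linearity of $L_\infty$. Your explicit description of $\cen{S,r}$ as the product $[x_{\max}-r,x_{\min}+r]\times[y_{\max}-r,y_{\min}+r]$ makes this sketch rigorous, and all four items follow from it exactly as you outline.
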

All these properties are easy to prove from the definition of $\cen{S,r}$ and linearity of $L_\infty$, thus we omit them. With these observations we can now solve the basic problem efficiently:

\begin{lemma}\label{lemma:basic-cte}
Let $\{W,H\}$ be a partition of $S$. If we are given the extreme sets $X(W)$ and $X(H)$, then the basic problem can be solved in constant time (for both fixed-length and variable-length cases).
\end{lemma}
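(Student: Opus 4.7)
The plan is to use Lemma~\ref{lem_propcen} to reduce the basic problem to a constant-size geometric optimization that depends only on $X(W)$ and $X(H)$. First I would rewrite the covering constraints: $W \subseteq \ball{f,R}$ is equivalent to $f \in \cen{W,R}$, and $H \subseteq \ball{t,R-\ell/v}$ is equivalent to $t \in \cen{H,R-\ell/v}$. By Lemma~\ref{lem_propcen}, $\cen{W,R}=\cen{X(W),R}$ and $\cen{H,R-\ell/v}=\cen{X(H),R-\ell/v}$, and each of these loci is either empty, a single point, an axis-parallel segment, or an axis-parallel rectangle, determined entirely by the minimum and maximum $x$- and $y$-coordinates of $X(W)$ and $X(H)$. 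Moreover, as $R$ grows the two regions inflate uniformly (by the last item of Lemma~\ref{lem_propcen}), so both of them can be written as piecewise-linear functions of $R$ with a constant-size description.

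For the variable-length variant, no geometric coupling between $f$ and $t$ is imposed beyond the two covering constraints, so the minimum $R$ can be read off directly from the extreme coordinates of $X(W)$ and $X(H)$ and the value of $\ell/v$: one simply needs $R \geq \delta(X(W))$ and $R-\ell/v \geq \delta(X(H))$, and any $f,t$ in the corresponding non-empty rectangles are admissible (a handful of degenerate subcases need to be checked separately). For the fixed-length variant I would additionally impose $\|f-t\|_2=\ell$. Writing the two center rectangles as explicit functions of $R$, the problem becomes: find the smallest $R$ for which these two rectangles contain a pair of points at Euclidean distance exactly $\ell$, i.e., the smallest $R$ for which the Minkowski sum of one rectangle with a circle of radius $\ell$ meets the other rectangle. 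The optimum is attained at a critical configuration in which $f$ and $t$ lie on specific sides or corners of their respective rectangles, and there are only $O(1)$ such configurations. Each one yields a small algebraic system of bounded degree (at most quadratic) whose smallest feasible root, checked against the rectangular constraints, gives a candidate value of $R$; the minimum over all candidates is the answer, and all of this is done in constant time.

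The main obstacle is the fixed-length case, since the Euclidean constraint $\|f-t\|_2=\ell$ couples $f$ and $t$ nonlinearly, in contrast to the purely axis-aligned constraints everywhere else. However, because the feasibility region for each of $f$ and $t$ is described by $O(1)$ parameters, the set of critical configurations has constant size, and a careful enumeration combined with the closed-form solution of a bounded-degree polynomial delivers the optimum in $O(1)$ time. Once both variants are handled, the lemma follows.
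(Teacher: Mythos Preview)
Your treatment of the fixed-length case is essentially the paper's: both parametrize by the common radius, observe that $\cen{W,\cdot}$ and $\cen{H,\cdot}$ are axis-parallel rectangles that inflate linearly, and then search for the smallest radius at which the two rectangles admit a pair of points at Euclidean distance exactly~$\ell$. The enumeration of critical configurations you sketch is equivalent to the paper's ``find the smallest $x$'' step.

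The variable-length case, however, has a genuine gap. You write that ``no geometric coupling between $f$ and $t$ is imposed beyond the two covering constraints'' and then treat $\ell/v$ as a given constant, concluding that $R=\max\{\delta(X(W)),\delta(X(H))+\ell/v\}$. But in the variable-length variant $\ell$ is \emph{not} a datum: it is the length of the highway we are locating, and since one endpoint of the turnpike is $f$ and the other is $t$, we have $\ell=\|f-t\|_2$. Hence the second covering constraint is actually
\[
H\subseteq \ball{t,\,R-\|f-t\|_2/v},
\]
which couples $f$ and $t$ just as strongly as the fixed-length constraint does (only now the coupling sits inside the radius rather than as a separate equation). Picking $f$ and $t$ independently in their respective rectangles can make $\|f-t\|_2$ large and force the second ball's radius below $\delta(H)$, so your proposed formula is not feasible in general.

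The paper handles this by minimizing
\[
\max\Bigl\{\delta(W)+\eps_1,\ \delta(H)+\eps_2+\tfrac{1}{v}\,g(\eps_1,\eps_2)\Bigr\},
\]
where $g(\eps_1,\eps_2)$ is the minimum Euclidean distance between $\cen{W,\delta(W)+\eps_1}$ and $\cen{H,\delta(H)+\eps_2}$; this is again a constant-size problem, but it is not the decoupled one you describe. Once you incorporate $\ell=\|f-t\|_2$ into the second radius, your constant-size reduction via $X(W)$ and $X(H)$ goes through, and the argument can be completed along the same lines as your fixed-length analysis.
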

\begin{proof}
We start by giving the algorithm for the case in which the distance between $f$ and $t$ is fixed. Observe that the radius $R$ of the partition will always satisfy $R\geq \max\{\delta(W), \delta(H)+\ell/v\}$ (otherwise an extreme point of either $W$ or $H$ will not be able to reach $f$ in $R$ units of time). If there exist two points $f\in \cen{W,\max\{\delta(W),\delta(H)+\ell/v\}}$ and $t\in \cen{H,\max\{\delta(W), \delta(H)+\ell/v\}}$ such that $||f-t||_2=\ell$ we are done.

Unfortunately, this does not always happen. In general, we must find two values $\eps_1,\eps_2\geq 0$ such that: $(i)$ $\delta(W)+\eps_1=\delta(H)+\eps_2+\frac{\ell}{v}$, $(ii)$ there are points $f\in \cen{W,\delta(W)+\eps_1}$ and $t\in \cen{H,\delta(H)+\eps_2}$ satisfying $\|f-t\|_2=\ell$, and $(iii)$ $\delta(W)+\eps_1=\delta(H)+\eps_2+\frac{\ell}{v}$ is minimized. The values $\eps_1$ and $\eps_2$ can be found in constant time as follows:

First, set $\eps_1=\max\{0,\delta(H)+\frac{\ell}{v}-\delta(W)\}$ and $\eps_2=\max\{0,\delta(W)-\delta(H)-\frac{\ell}{v}\}$. That is, we either increase the radius of $W$ to $\delta(H)+\ell/v$ or increase the radius of $H$ to $\delta(W)$. By increasing the smallest radius we are ensuring that condition $(i)$ is satisfied. We now look for the smallest value of $x>0$ such that there are points $f\in \cen{W,\delta(W)+\eps_1+x}$ and $t\in \cen{H,\delta(H)+\eps_2+x}$ satisfying $\|f-t\|_2=\ell$. Observe that this problem is of constant size and can be computed in $O(1)$ time. 

The variable-length variant of the problem is slightly simpler.
The main difference is that $\eps_1$ and $\eps_2$ must now minimize the expression $\max\{\delta(S_1)+\eps_1,$ $\delta(S_2)+\eps_2+g(\eps_1,\eps_2)/v\}$, where $g(\eps_1,\eps_2)$ denotes the smallest Euclidean distance between points $f\in\cen{S_1,\delta(S_1)+\eps_1}$ and a point $t\in\cen{S_2,\delta(S_2)+\eps_2}$. As before, this problem has constant size and thus can be solved in $O(1)$ time. \qed

\end{proof}

By combining the above results we obtain a method to solve both problems:

\begin{theorem}\label{theo_FL1C1T}
Both variants of the 1C1T problem can be solved in $O(n^2)$ time and $O(n)$ space.
\end{theorem}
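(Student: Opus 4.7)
The plan is to enumerate the $O(n^2)$ candidate partitions guaranteed by Lemma \ref{lem_key}, maintain the extreme sets $X(W)$ and $X(H)$ in amortized $O(1)$ time per candidate, and apply Lemma \ref{lemma:basic-cte} to solve the basic problem on each one in $O(1)$. The optimal partition is the one whose radius is smallest, and Lemma \ref{lemma:basic-cte} also returns the corresponding $f^*$ and $t^*$. Recall that when $H^*$ is empty, $f^*$ is the rectilinear 1-center of $S$ and can be computed in linear time, so this degenerate case is handled separately.

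For the line-separated partitions (cases (a) and (b) of Figure \ref{fig:cases}), I would sort $S$ by $x$-coordinate and, with one left-to-right sweep and one right-to-left sweep, compute the extreme sets of every prefix and every suffix in $O(n)$ total time. The same is done for the $y$-order. Each of the resulting $O(n)$ partitions is then processed in $O(1)$ via Lemma \ref{lemma:basic-cte}, giving $O(n)$ overall for this family.

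For the quadrant partitions (case (c)), the corner position $(a,b)$ is combinatorially determined by two coordinates from $S$. Fix one of the $n$ possible values of $a$ and call the points with $x\leq a$ "eligible". As $b$ decreases through its $n$ relevant values, eligible points migrate from $H$ into $W$ one at a time. Since $|X(W)|\leq 4$, each such insertion updates $X(W)$ in $O(1)$ by comparing the new point against the four current extremes. The set $H$ only loses points, which is the harder direction; I would handle it by running a preliminary bottom-up sweep over the same sequence that builds $X(H)$ for every relevant $b$ using only insertions (since, in reverse time, $H$ grows). These are stored in a table of size $O(n)$ for the current $a$ and then looked up in $O(1)$ during the top-down sweep. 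Repeating over all $n$ values of $a$ and the constant number of quadrant orientations needed to lift the WLOG assumption in Lemma \ref{lem_key} yields $O(n^2)$ time and $O(n)$ space, since the auxiliary table is reused across successive values of $a$. The main obstacle is precisely the asymmetric insert/delete behaviour of $H$, which is overcome by reversing the sweep direction so that every update becomes an insertion and can therefore be absorbed into the $O(1)$-per-step budget.
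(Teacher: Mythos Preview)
Your proposal is correct and follows essentially the same approach as the paper. Both arguments fix one threshold of the quadrant, perform two sweeps over the other threshold so that each of $W$ and $H$ is only ever updated by insertions (one sweep in each direction), store the resulting extreme sets in an $O(n)$ table, and then invoke Lemma~\ref{lemma:basic-cte} on each of the $O(n^2)$ candidate partitions; your explicit discussion of the deletion issue and its resolution via the reversed sweep is exactly what the paper's ``sweep twice (downwards and upwards)'' is doing.
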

\begin{proof}
Recall that, by Lemma \ref{lem_key}, we can split $S$ into sets $H^*$ and $W^*$ by either a  vertical line or an upper-left quadrant. We start by considering first the case in which there is a vertical splitting line, cases (a) and (b) in Figure~\ref{fig:cases}. Sort the points of $S$ in increasing value of $x$ coordinates; let $p_1, p_2, \ldots , p_n$ be the obtained order. For any $1\leq i< n$, let $L_i$ and $R_i$ be the smallest bounding axis-aligned rectangle containing points $\{p_1,\dots,p_i\}$ and $\{p_{i+1},\dots,p_n\}$, respectively. By scanning from left to right, we can compute and store the extreme points of $L_i$ for all $1\leq i<n$ in $O(n)$ time. Analogously we sweep from right to left and compute $X(R_i)$ in linear time as well. Then we solve the basic problem for each pair $(L_i,R_i)$ for all values $1\leq i<n$ using Lemma \ref{lemma:basic-cte}. The computationally speaking most expensive part of the algorithm is computing the initial sorting of the points of $S$, which needs $O(n\log n)$ time.

To complete the proof it remains to show how case (c)  in Figure~\ref{fig:cases} can be solved in $O(n^2)$ time and $O(n)$ space.
Let $q_1, q_2, \ldots, q_n$ be the points of $S$ sorted in decreasing order of $y$ coordinates. For any $1 \leq i,j,\leq n$, let $UR_{i,j}$ and $DL_{i,j}$ be the smallest bounding rectangles of the sets $UR_{i,j}:=\{u\in S~|~\y{u}>\y{q_i}\wedge \x{u}<\x{p_j}\}$ and $S\setminus UR_{i,j}$, respectively. For any fixed $1\leq i \leq n$ we can proceed as in the previous case. That is, sweep twice the pointset (downwards and upwards) and compute the set of extreme points $X(UR_{i,j})$ for all $1\leq j\leq n$. Once the bounding rectangles are known, we can solve the $O(n)$ basic problem instances in constant time each. We repeat this algorithm for all values of $i$ and find the optimal partition in quadratic time. Observe that this method never uses more than $O(n)$ memory, since once a column has been scanned we need only store the best partition found. \qed
\end{proof}

\subsection{Locating a turnpike of variable length}
Using Theorem \ref{theo_FL1C1T} we have an algorithm that runs in $O(n^2)$ time for both the fixed-length and variable-length variants of the problem. The bottleneck of the algorithm is case (c) of Lemma \ref{lem_key}. In the following we show how to treat this case more efficiently for the variable-length case.

Given $S$, let $p_N$ and $p_S$ be the points with highest and lowest $y$-coordinate, respectively. Analogously, $p_E$ and $p_W$ are defined with respect to the $x$-coordinates.
Without loss of generality, we assume that $\delta(S)=\x{p_E}-\x{p_W}$ (that is, the width of $S$ is larger than its height).
%

\begin{lemma}\label{lemma:corner}
If in every optimal solution of the VL-1C1T problem each ball contains a corner of the other one, there exists an optimal solution $(f^*,t^*)$ of radius $R^*$ of the VL-1C1T problem in which the extreme points $X(S)$ are in the boundary of $\ball{f^*,R^*} \cup \ball{t^*,R^*-\ell/v}$.
\end{lemma}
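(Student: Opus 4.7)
My plan is to start from any optimal pair $(f^*,t^*)$ with radius $R^*$ and associated length $\ell=\|f^*-t^*\|_2$, and to show that by a short sequence of translations of $f^*$ and $t^*$ that preserve both optimality and coverage, we can bring each of $p_N,p_S,p_E,p_W$ onto the boundary of $U=\ball{f^*,R^*}\cup\ball{t^*,R^*-\ell/v}$. Without loss of generality, assume $f^*$ lies to the NW of $t^*$. By the corner-containment hypothesis, the two balls overlap as in case~(c), so $\partial U$ decomposes cleanly into a ``staircase'': the top and left edges of $U$ come from $\ball{f^*,R^*}$, and the bottom and right edges of $U$ come from $\ball{t^*,R^*-\ell/v}$. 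Hence putting $p_N$ on $\partial U$ is equivalent to the equality $\y{p_N}=\y{f^*}+R^*$, and analogously for $p_W$ (left of $\ball{f^*,R^*}$), $p_S$ (bottom of $\ball{t^*,R^*-\ell/v}$) and $p_E$ (right of $\ball{t^*,R^*-\ell/v}$).

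The heart of the argument is that each of those four equalities can be achieved by a translation that neither changes $R^*$ nor loses coverage. For $p_N$, I would translate $f^*$ downward by the slack $\eps=\y{f^*}+R^*-\y{p_N}\geq 0$. The top of $\ball{f^*,R^*}$ drops by exactly $\eps$, but since no point of $S$ has $y$-coordinate exceeding $\y{p_N}$, none is lost there; the left and right sides of $\ball{f^*,R^*}$ do not move and the bottom only moves further down, so no coverage is lost from those sides either. The translation brings $f^*$ closer to $t^*$ in the $y$-direction, so $\ell=\|f^*-t^*\|_2$ strictly decreases and hence $\ball{t^*,R^*-\ell/v}$ \emph{grows} rather than shrinks, preserving all coverage it already provided. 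Thus the new pair still covers $S$ with radius $R^*$ and is optimal. The same recipe applied in the three other cardinal directions---translate $f^*$ to the right to catch $p_W$, translate $t^*$ upward for $p_S$, and translate $t^*$ to the left for $p_E$---drives the remaining extreme points to $\partial U$; because the four translations act on disjoint (axis, ball) pairs, they do not undo each other and can be composed in sequence.

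The main obstacle I foresee is justifying that the structural description of $\partial U$ (top/left from $\ball{f^*,R^*}$, bottom/right from $\ball{t^*,R^*-\ell/v}$) really persists along the deformation, so that reaching the correct side of the correct ball actually places an extreme on $\partial U$. This is exactly where the hypothesis is used: each intermediate pair produced by a translation is still optimal (same $R^*$, still covering $S$), so by hypothesis it satisfies the corner-containment property, which for $f^*$ to the NW of $t^*$ forces $\y{f^*}-\y{t^*}\geq \ell/v$ and $\x{t^*}-\x{f^*}\geq \ell/v$---exactly the inequalities that preserve the staircase orientation of $U$. At the end of the four translations, each of $p_N,p_S,p_E,p_W$ lies on the appropriate outer side of the appropriate ball, hence on $\partial U$, which is what the lemma requires.
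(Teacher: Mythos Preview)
Your approach is essentially the paper's: translate $f^*$ and $t^*$ along axis directions to push each extreme point onto the appropriate side of the appropriate ball, using the corner-containment hypothesis to guarantee the NW/SE orientation is never lost along the way. The paper's proof is equally terse (``analogously we can do the same operation on the other extreme points''), so the strategies match.

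There is, however, a genuine gap in your claim that ``the four translations act on disjoint (axis, ball) pairs, [so] they do not undo each other.'' The pairs are indeed disjoint, but the effects are not: any translation of $t^*$ changes $\ell=\|f^*-t^*\|_2$, which changes the \emph{radius} $R^*-\ell/v$ of $\ball{t^*,R^*-\ell/v}$, and hence moves \emph{all four} sides of that ball simultaneously. Concretely, after your translation~3 places $p_S$ on the bottom side $y(t^*)-(R^*-\ell/v)$, your translation~4 slides $t^*$ left; this decreases $\ell$, increases $R^*-\ell/v$, and therefore pushes the bottom side strictly below $y(p_S)$ again. So $p_S$ ends up in the interior of $\ball{t^*,R^*-\ell/v}$, hence in the interior of $U$, and the lemma's conclusion fails for $p_S$. (Translations~1 and~2 really are independent, since the radius $R^*$ of $\ball{f^*,R^*}$ does not depend on $\ell$; the coupling appears only for the $t$-ball.)

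Two remarks. First, the application in the next theorem only needs $p_N$ and $p_W$ on $\partial\ball{f^*,R^*}$ to pin down its upper-left corner, and for that your translations~1 and~2 suffice and genuinely commute. Second, if you want the full statement, a clean repair is a compactness argument: among all optimal pairs $(f,t)$ take one minimising $\ell$; if some extreme point were strictly interior, the corresponding single translation you describe would produce another optimal pair with strictly smaller $\ell$, a contradiction. This avoids composing the $t$-ball translations altogether.
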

\begin{proof}
%
Observe that if any of the balls contains both $p_N$ and $p_S$ (or $p_W$ and $p_E$), the sets $H^*$ and $W^*$ can be split by either a horizontal or vertical line. Since we assumed that this is not possible, the ball that contains $p_N$ cannot contain $p_S$ (analogously, the ball that contains $p_W$ cannot contain $p_E$). Without loss of generality, we assume that $p_N,p_W \in\ball{f^*,R^*}$ and $p_S,p_E\in\ball{t^*,R^*-\ell/v}$. Observe that this implies that $\x{f^*}\leq\x{t^*}$, and $\y{f^*}\geq\y{t^*}$ (that is, the highway and the abscissa form an angle between $0$ and $-\pi/2$).

Assume that point $p_N$ is not on the boundary of $\ball{f^*,R^*}$. In the following we will show how to do a local perturbation to the solution so that we obtain $p_N$ on the boundary. We translate $f^*$ downwards continuously while keeping $t^*$ unchanged  until $p_N$ reaches the top boundary. Observe that, since initially $f^*$ has higher coordinate than $t^*$, the translation will reduce the distance between $f^*$ and $t^*$ until both points share the same $y$ coordinate. However, observe that this cannot happen, since otherwise we can split $W^*$ and $H^*$ with a vertical line. Moreover, no point of $\ball{f^*,R^*}$ can leave the ball before $p_N$ reaches the top boundary, hence optimality is preserved through this translation operation. Analogously we can do the same operation on the other extreme points and obtain that either all extreme points are in the boundary of  $\ball{f^*,R^*} \cup \ball{t^*,R^*-\ell/v}$ or find a way to split both $W^*$ and $H^*$ with a vertical or horizontal line. \qed
\end{proof}

With this observation we can speed up the algorithm for the variable length variant of the problem:

\begin{theorem}\label{theo_vl1C1T}
The VL-1C1T problem can be solved in $O(n\log n)$ time and $O(n)$ space.
\end{theorem}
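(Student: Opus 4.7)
The plan is to show that case (c) of Lemma \ref{lem_key} admits an $O(n\log n)$-time routine; combined with the $O(n\log n)$ treatment of cases (a) and (b) from the algorithm of Theorem \ref{theo_FL1C1T}, this will yield the claimed bound. The sorting cost in cases (a) and (b) is already within budget, so only case (c) needs to be sped up.

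First I would invoke Lemma \ref{lemma:corner}. If its hypothesis fails, some optimal solution has a ball that does not contain a corner of the other; as observed in the proof of that lemma, this forces $W^*$ and $H^*$ to be separable by a horizontal or vertical line, a configuration already captured by cases (a) and (b). Otherwise the lemma guarantees an optimal solution in which the four extreme points of $S$ lie on the boundary of $\ball{f^*,R^*}\cup\ball{t^*,R^*-\ell/v}$. Up to a constant number of symmetric configurations I may assume $p_N$ lies on the top side and $p_W$ on the left side of $\ball{f^*,R^*}$, while $p_S$ lies on the bottom side and $p_E$ on the right side of $\ball{t^*,R^*-\ell/v}$. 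This pins the centers as $f = (\x{p_W}+R,\,\y{p_N}-R)$ and $t = (\x{p_E}-R',\,\y{p_S}+R')$, where $R = R^*$ and $R' = R^*-\ell/v$.

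The remaining constraint $\|f-t\|_2 = v(R-R')$ is a single algebraic equation in the two unknowns $R$ and $R'$, reducing case (c) to a one-parameter family. Squaring yields $v^2(R-R')^2 = (\x{p_E}-\x{p_W}-R-R')^2 + (\y{p_N}-\y{p_S}-R-R')^2$, which is quadratic and can be solved to express $R'$ as an explicit function $R'(R)$ on the feasible range. Every remaining point $p\in S$ imposes the disjunctive condition ``$R\ge\alpha(p)$ or $R'\ge\beta(p)$'', where $\alpha(p)=\max\{(\x{p}-\x{p_W})/2,(\y{p_N}-\y{p})/2\}$ is the smallest radius of $\ball{f,R}$ covering $p$, and $\beta(p)=\max\{(\x{p_E}-\x{p})/2,(\y{p}-\y{p_S})/2\}$ is the analogue for $\ball{t,R'}$. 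Both $\alpha(p)$ and $\beta(p)$ are computable in $O(1)$, as is the inversion $R_\beta(p)$ defined as the value of $R$ for which $R'(R)=\beta(p)$.

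With these ingredients in hand, the optimal radius for case (c) is $R^*_{(c)}=\max_{p\in S}\min\{\alpha(p),R_\beta(p)\}$, evaluable in $O(n)$ time. Running this $O(n)$ routine (together with its finitely many symmetric variants) alongside the $O(n\log n)$ procedure for cases (a) and (b), and returning the best solution across all cases, proves the theorem; the $O(n)$ space bound is preserved since the case (c) routine stores only the four extreme points and a running best. The main technical obstacle I foresee is verifying that the relevant branch of the quadratic defines a continuous, monotonic $R'(R)$ on its natural feasible domain (roughly $R+R'\le\min(W,H)$ with $W=\x{p_E}-\x{p_W}$ and $H=\y{p_N}-\y{p_S}$), so that $R_\beta(p)$ is unambiguously defined and the $\min$-$\max$ expression above truly captures the optimal $R$; verifying this monotonicity is a short calculus exercise on the derived equation, but it is the one place where the argument is not purely combinatorial.
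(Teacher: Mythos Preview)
Your overall decomposition matches the paper's: both arguments use Lemma~\ref{lemma:corner} to pin the top-left corner of $\ball{f^*,R^*}$ at $u=(\x{p_W},\y{p_N})$, and your quantity $\alpha(p)=\max\{(\x{p}-\x{p_W})/2,(\y{p_N}-\y{p})/2\}$ is precisely $\tfrac12\|p-u\|_\infty$. At this point the paper takes a cleaner route than you do: it simply sorts $S$ by $\|p-u\|_\infty$ and observes that $W^*$ must be a \emph{prefix} of this order (since $p\in\ball{f^*,R^*}$ iff $\|p-u\|_\infty\le 2R^*$). This reduces case~(c) to $n$ instances of the basic problem, each dispatched in $O(1)$ time by Lemma~\ref{lemma:basic-cte}, exactly as in cases~(a) and~(b). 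No algebraic curve, no monotonicity check.

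Your alternative---pinning \emph{both} corners and solving along the quadratic curve $v^2(R-R')^2=(W-R-R')^2+(H-R-R')^2$---does not go through as stated. The ``short calculus exercise'' you defer is in fact false in general: writing $s=R+R'$ and $d=R-R'$, one gets $vd=\sqrt{(W-s)^2+(H-s)^2}$, hence for $W=H$ simply $vd=\sqrt2\,(W-s)$ and $R=\tfrac12\bigl(s+\sqrt2(W-s)/v\bigr)$, which is \emph{decreasing} in $s$ whenever $v<\sqrt2$. In that regime $R'$ is a decreasing function of $R$, so $R_\beta(p)$ is not a lower bound on $R$ and the identity $R^*_{(c)}=\max_p\min\{\alpha(p),R_\beta(p)\}$ is wrong. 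The approach can likely be repaired (the curve has at most two monotone pieces, so a sort-and-sweep over $\alpha$- and $\beta$-thresholds still fits in $O(n\log n)$), but the clean $O(n)$ formula you claim does not survive. The paper's prefix-sweep sidesteps this entirely by never needing $R'(R)$ to be single-valued or monotone.
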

\begin{proof}
By Lemma \ref{lemma:corner} either there exists an axis-parallel line that splits the sets $W^*$ and $H^*$ or all extreme points are in the boundary of $\ball{f^*,R^*} \cup \ball{t^*,R^*-\ell/v}$. The first case can be treated in  $O(n\log n)$ time using the same approach as in Theorem \ref{theo_FL1C1T}, hence we focus on the latter case.

Without loss of generality, we assume that $\x{f^*}<\x{t^*}$, and $\y{f^*}\geq\y{t^*}$.
%
%
In particular, this implies that $p_N, p_W \in B(f^*,R^*)$ and $p_E, p_S \in B(t^*,R^*-\ell/v)$. Denote by $u=(x(p_W),y(p_N))$ the top-left corner of the smallest enclosing axis-aligned rectangle of $S$. By Lemma \ref{lemma:corner} this point must also be the top-left corner of $B(f^*,R^*)$. Let $r_1, r_2,\dots, r_n$ be the elements of $S$ sorted in increasing $L_\infty$ distance to $u$. Then now apply the same approach as used in cases (a) and (b) using the new ordering instead. The result thus follows. \qed
\end{proof}
\section{Freeway location}

In this section we consider the case in which the highway to locate is a freeway (instead of a turnpike).
Now, the travel time between a demand point $p$ and the
service facility $f$, denoted by $d_h(p,f)$, is equal to:
\begin{equation}\label{eq1}
\min\left\{
\begin{array}{l}
\|p-f\|_1,\\
\min_{q_1,q_2\in
h}\left\{\|p-q_1\|_1+\frac{\|q_1-q_2\|_2}{v}+\|q_2-f\|_1\right\}
\end{array}
\right.
\end{equation}


\begin{figure}[h]
    \centering
    \includegraphics[width=0.5\textwidth]{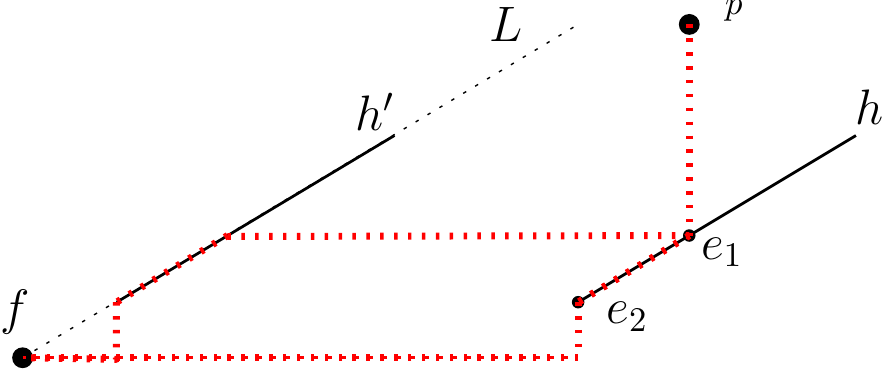}
    \caption{Illustration of the first step of the proof of Lemma \ref{lem:characfree}}
    \label{fig_impro}
\end{figure}
\begin{lemma}\label{lem:characfree}
In both variants(fixed and variable length) of the 1C1F problem  there exists an optimal solution in which the facility point is located on the highway.
\end{lemma}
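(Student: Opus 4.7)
The plan is a local improvement argument on any optimal solution. Let $(f^*, h^*)$ be optimal with value $R^* = \max_{p \in S} d_{h^*}(p, f^*)$, and assume for contradiction that $f^* \notin h^*$. I would construct an optimal solution $(f', h^*)$ with $f' \in h^*$ by two consecutive deformations of the facility, keeping the freeway fixed throughout.

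The first phase, illustrated in Figure~\ref{fig_impro}, translates $f^*$ perpendicularly toward the supporting line $L^*$ of $h^*$ until it meets $L^*$. I would show that no demand point's travel time grows during this translation. For a point $p$ that uses the freeway in the optimum with entry $q_1 \in h^*$ and exit $q_2 \in h^*$, the $L_1$ cost $\|q_2 - f^*\|_1$ decomposes into components parallel and perpendicular to $L^*$; shrinking the perpendicular gap between the facility and $L^*$ shrinks exactly this cost while leaving $\|p - q_1\|_1$ and $\|q_1 - q_2\|_2/v$ unchanged. For a walker $p$ lying on the same side of $L^*$ as $f^*$, its walking distance is non-increasing. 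The delicate case is a walker $p$ on the opposite side of $L^*$: its walking distance to the moving facility may grow, and I would have to produce a freeway-using alternative, through a suitable entry--exit pair on $h^*$, whose total cost is bounded by the original walking cost, using that $v > 1$ and that a freeway admits entry or exit anywhere along its length.

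The second phase handles the case in which, after phase~1, the facility lies on $L^*$ but outside the segment $h^*$: slide it along $L^*$ toward the nearer endpoint of $h^*$. Because all motion is now parallel to $h^*$, the analysis is a simpler variant of the previous one; in the variable-length variant one may alternatively just extend $h^*$ to include $f^*$, which can only decrease travel times. The main obstacle I expect is the third case of phase~1: constructing the freeway-using alternative path for a walker on the far side of $L^*$. This rerouting is precisely where the freeway's unrestricted access distinguishes this lemma from the turnpike analysis, and it is what Figure~\ref{fig_impro} is meant to clarify.
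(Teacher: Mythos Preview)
Your plan moves the facility toward the highway; the paper does the opposite, keeping $f$ fixed and translating $h$ (first horizontally onto the line $L$ through $f$ parallel to $h$, then along $L$ until it contains $f$). That choice is not cosmetic: with $f$ fixed, the walking option $\|p-f\|_1$ is unchanged for every $p$, so walkers are handled for free, and for a highway user the paper simply observes that the three vectors $\vec{pe_1},\vec{e_1e_2},\vec{e_2f}$ making up the old path can be reassembled into a path of the same cost whose highway leg lies on the translated segment. Figure~\ref{fig_impro} depicts this translation of $h$, not a translation of $f$.

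Your Phase~1 argument has a genuine gap. The assertion that ``the $L_1$ cost $\|q_2-f^*\|_1$ decomposes into components parallel and perpendicular to $L^*$'' is false unless $L^*$ is axis-parallel: the $L_1$ norm decomposes along the coordinate axes, not along an arbitrary line. Consequently, sliding $f^*$ along the Euclidean perpendicular to $L^*$ need not decrease $\|q_2-f^*\|_1$ for $q_2\in h^*$, and your claim about walkers on the same side of $L^*$ also fails. For a concrete counterexample take $L^*=\{y=x\}$, $f^*=(0,2)$, and $p=(0,3)$ (a walker on the same side as $f^*$): moving $f^*$ perpendicularly to $(1,1)\in L^*$ takes $\|p-f^*\|_1$ from $1$ to $3$. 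So the ``delicate'' far-side walker is not the only obstacle; the monotonicity you rely on for highway users and near-side walkers already breaks down. The clean fix is exactly the paper's: translate the highway rather than the facility, which makes the walker case trivial and reduces the highway-user case to a rearrangement of the same three vectors.
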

\begin{proof}
Let $(f,h)$ be an optimal solution and assume that $h$ does not pass through $f$. Let $L$ be the line that passes through $f$ and has the same orientation as $h$. Translate $h$ horizontally until it passes through $L$, let $h'$ be the obtained highway. In the following we show that the distance between any point $p\in S$ cannot have increased in the translation process: if $p$ does not use highway $h$, clearly we have $d_h(p,f)\leq d_{h'}(p,f)$. Otherwise $p$ enters highway $h$ at point $e_1\in h$ (and leaves $h$ at $e_2$). Observe that the length of the path is the sum of lengths of the vectors $\vec{pe_1}$,$\vec{e_1e_2}$ and $\vec{e_2f}$, where the first and last  vectors are measured with the $L_1$ metric (and the second one is inside the freeway). Note, however that the same path can be done on $h'$ (see Figure \ref{fig_impro}), hence the travel time is unaffected.

If $f\in h'$ we are done, otherwise, translate $h'$ along the line $L$ until it hits $f$ (let $h''$ be the new highway). Analogously to the previous case, we have $d_{h'}(f,s)=d_{h''}(f,s)$, hence this second location is also optimal and the result thus follows.\qed
\end{proof}


From this point forward we consider that the facility point $f$ is in $h$. Given a potential solution $(f,h)$, let $e$ be the lower leftmost endpoint of $h$ (and $e'$ be the other endpoint). In the following we will characterize the shortest paths from any point $p\in\R^2$ to $f$. Let $\pi(p)$ be the shortest path connecting $p$ and $f$. It is easy to see that if $\pi(p)$ enters the freeway, it will not leave it until it reaches $f$. Let $\entry(p)$ be the point in which the paths enters the freeway (if the highway is not used in $\pi$ we simply define $\entry(p)=f$). Observe that $d_h(p,f)=\|p-\entry(p)\|_1+\frac{\|\entry(p)-f\|_1}{v}$.

Let $\alpha$ always denote the non-negative angle of the highway
with respect to the positive direction of the $x$-axis. Unless
otherwise specified, we assume $0\leq\alpha\leq\frac{\pi}{4}$.
Observe that if $\alpha>\frac{\pi}{4}$ we can, by properties of
$L_1$ and $L_2$ metrics, modify the coordinate system so that angle
$\alpha$ satisfies $0\leq\alpha\leq\frac{\pi}{4}$.
For any point $p\in S$ let $p'$ be the intersection point between $h$ and the vertical line passing through $p$ (if it exists). This point is called the {\em vertical projection} of $p$. Analogously we define $p''$ as the horizontal projection. We say that a point $p$ is above $h$ if it is above the line passing through $h$. Notice from the
assumption $0\leq\alpha\leq\frac{\pi}{4}$ that given $h$ and a
demand point $p$, $p'$ is the nearest point to $p$ on $h$ under the
$L_1$ metric.

The next lemma characterizes the way in which demand points move
optimally to the facility. A detailed proof is given in \cite{bklv-lsfrtlovl-11} for the case in which the length is a variable. The proof can be easily adapted for the fixed length case.
Let $\varphi_v=\frac{\pi}{4}-\arcsin\left(\frac{\sqrt{2}}{2v}\right)$. Since $v>1$
we have $0<\varphi_v<\frac{\pi}{4}$.


\begin{lemma}\label{lem:spm}\cite{bklv-lsfrtlovl-11}
If $\varphi_v<\alpha\leq\frac{\pi}{4}$, the shortest path between $p$ and $f$ satisfies one of the following:
\begin{enumerate}
\item If $x(p)\leq x(e)$,  $\entry(p)=e$.
\item If $x(p)\geq x(e')$, $\entry(p)=e'$.
\item Otherwise, $\entry(p)=p'$.
\end{enumerate}
Otherwise,  $0\leq\alpha\leq\varphi_v$ 
 and the shortest path satisfies:
\begin{enumerate}
\item if $x(p)\leq x(e)$ and $y(p)\leq y(e)$, $\entry(p)=e$.
\item if $x(p)\geq x(e')$ and $y(p)\geq y(e')$, $\entry(p)=e'$.
\item if $x(f)\leq x(p)\leq x(e')$ and $s$ is above $h$, $\entry(p)=p'$.
\item if $y(f^*)\leq y(p)\leq y(e')$ and $p$ is below $h$, $\entry(p)=p''$.
\item if $y(e)\leq y(p)\leq y(f)$ and $p$ is above $h$, $\entry(p)=p''$.
\item if $x(e)\leq x(p)\leq x(f)$ and $p$ is below $h$, $\entry(p)=p'$.
\item if $x(p)\leq x(f)$ and $y(p)\geq y(f)$, $\entry(p)=f$.
\item if $x(p)\geq x(f)$ and $y(p)\leq y(f)$, $\entry(p)=f$.
\end{enumerate}
\end{lemma}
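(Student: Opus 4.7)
The plan is to treat $\entry(p)$ as the minimizer of a one-variable piecewise linear function obtained by fixing $p$, $f$, $h$ and letting the entry point vary along $h$. Parametrize $h$ by arc length from $e$, writing $q(t)=e+t(\cos\alpha,\sin\alpha)$ for $t\in[0,\ell]$, and consider
$$T(t)=\|p-q(t)\|_1+\frac{|t-t_f|}{v},$$
where $t_f$ is the parameter value corresponding to $f$ and the second term is the travel time along $h$ from $q(t)$ to $f$. Let $t_x=(\x{p}-\x{e})/\cos\alpha$ and $t_y=(\y{p}-\y{e})/\sin\alpha$ be the parameter values whose images under $q$ are the vertical and horizontal projections $p'$ and $p''$ of $p$. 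Then $T(t)$ is piecewise linear with breakpoints lying only in $\{t_x,t_y,t_f\}\cap[0,\ell]$, so the optimal entry point must belong to $\{e,e',p',p'',f\}$.

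The second step is to compute the slope of $T$ in each piece. Between consecutive breakpoints, the signs of $\x{p}-\x{q(t)}$, $\y{p}-\y{q(t)}$ and $t-t_f$ are all constant, so the slope of $T$ takes the form $\sigma_1\cos\alpha+\sigma_2\sin\alpha+\sigma_3/v$ with $\sigma_i\in\{-1,+1\}$. A short trigonometric calculation from the definition $\varphi_v=\pi/4-\arcsin(\sqrt{2}/(2v))$ gives $\cos\varphi_v-\sin\varphi_v=1/v$, so the sign of the critical expression $\cos\alpha-\sin\alpha-1/v$ flips exactly at $\alpha=\varphi_v$. All other sign combinations that arise are unambiguous on $[0,\pi/4]$ (for instance $\cos\alpha+\sin\alpha+1/v>0$ and $-\cos\alpha-\sin\alpha-1/v<0$ are automatic, since $\cos\alpha+\sin\alpha\geq 1>1/v$). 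This single comparison is therefore what cleaves the lemma into its two regimes: in the steep regime $\alpha>\varphi_v$ certain slope transitions at $t_x$ or $t_y$ fail to produce a sign change, which collapses several would-be sub-cases and leaves only three; in the shallow regime $\alpha\leq\varphi_v$ those transitions do produce sign changes, yielding the eight sub-cases.

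The final step is a routine case enumeration: for every combination of (i) whether $\x{p}$ is $\leq\x{e}$, inside $(\x{e},\x{e'})$, or $\geq\x{e'}$, (ii) whether $p$ lies above or below $h$, and (iii) the relative position of $p$ with respect to $f$, identify the first breakpoint at which the slope of $T$ transitions from non-positive to non-negative, and read off $\entry(p)$ accordingly. The main obstacle is purely bookkeeping rather than conceptual: the shallow regime demands checking up to eight sign configurations and grouping those related by the symmetry $(x\leftrightarrow y, p\leftrightarrow$ its reflection in $h)$. Since the entire argument only requires minimizing $T$ over $[0,\ell]$ for an arbitrary but fixed $\ell$, and never exploits the freedom to vary $\ell$, the proof of \cite{bklv-lsfrtlovl-11}, originally written for variable-length freeways, transfers verbatim to the fixed-length variant, which is exactly the adaptation the lemma appeals to.
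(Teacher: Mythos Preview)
Your approach is correct. Note, however, that the paper does not actually supply its own proof of this lemma: it simply cites \cite{bklv-lsfrtlovl-11} and remarks that the argument there, written for the variable-length case, adapts to fixed length. So there is no in-paper argument to compare against; what you have written is a self-contained proof sketch where the paper offers only a pointer.

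That said, your parametrization of the entry point by arc length and reduction to minimizing the one-variable piecewise-linear function $T(t)$ is exactly the natural approach, and your identification of the pivotal trigonometric identity $\cos\varphi_v-\sin\varphi_v=1/v$ (which follows directly from $\varphi_v=\pi/4-\arcsin(\sqrt{2}/(2v))$) correctly isolates why the case split occurs at $\alpha=\varphi_v$: among the eight possible slope values $\pm\cos\alpha\pm\sin\alpha\pm 1/v$, only $\cos\alpha-\sin\alpha-1/v$ and its negative can change sign on $[0,\pi/4]$, and they do so precisely at $\varphi_v$. Your observation that $T(t_f)=\|p-f\|_1$ automatically subsumes the ``walk directly'' option is also the right way to fold the outer minimum in the definition of $d_h$ into the single optimization over $t$. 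The only thing to watch is the degenerate case $\alpha=0$, where $\sin\alpha=0$ makes $t_y$ undefined; there the horizontal projection $p''$ simply does not exist as a distinct breakpoint, and the analysis goes through with one fewer piece. Your final remark that nothing in the argument depends on $\ell$ being variable is exactly the ``easy adaptation'' the paper alludes to.
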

\begin{corollary}\label{cor_convex}
For any optimal solution $(f,h)$ in which $f\in h$ and $R>0$, the ball (with respect to the metric $d_h$) of radius $R$  centered at $f$  is a convex polygon with at most eight faces.
\end{corollary}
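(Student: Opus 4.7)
The statement decomposes into two claims: that the $d_h$-ball is convex, and that its boundary has at most eight edges. I would establish them separately.

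For convexity I would argue analytically. By Lemma~\ref{lem:characfree} we may assume $f\in h$, and since a shortest path does not leave the freeway once it enters, we have
$d_h(p,f)=\inf_{q\in h}\bigl(\|p-q\|_1+\|q-f\|_2/v\bigr)$.
The map $(p,q)\mapsto\|p-q\|_1+\|q-f\|_2/v$ is a sum of a jointly convex function and a convex function of $q$ alone, hence jointly convex on $\R^2\times h$. Because $h$ is a convex set and the partial infimum of a jointly convex function over a convex set is convex, $p\mapsto d_h(p,f)$ is convex on $\R^2$, so its sublevel set $\{p:d_h(p,f)\le R\}$ is convex.

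For the facet count I would invoke Lemma~\ref{lem:spm}, which partitions the plane into a bounded number of regions according to the form of $\entry(p)$. In each region $\entry(p)$ is either a fixed point from $\{e,e',f\}$ or an axis-aligned projection of $p$ onto the line through $h$; in both cases $\|p-\entry(p)\|_1$ becomes an affine function of $p$ once the signs of the surviving $L_1$ absolute values are fixed. The term $\|\entry(p)-f\|_2/v$ is also affine in $p$, because $\entry(p)$ and $f$ lie on a common line and the Euclidean distance between them is linear in the corresponding arc-length parameter. A short case analysis shows that when $0\le\alpha\le\varphi_v$ the eight regions of Lemma~\ref{lem:spm} already yield at most eight linear pieces, while when $\varphi_v<\alpha\le\pi/4$ the middle strip of Lemma~\ref{lem:spm} further splits into four pieces by the line supporting $h$ and the vertical through $f$, with each of the two outer regions contributing two $L_1$ faces, again giving at most eight in total.

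Putting the two observations together, $d_h(\cdot,f)$ is a convex piecewise linear function with at most eight maximal linear pieces, so its ball of radius $R$ is a convex polygon with at most eight faces. The main obstacle I anticipate is the careful bookkeeping for the facet count in the regime $\varphi_v<\alpha\le\pi/4$, where the bound of eight is attained tightly and one must confirm that the two different mechanisms producing faces (shifted $L_1$ balls centred at $e,e'$ and the four-piece expression inside the middle strip) do not overshoot.
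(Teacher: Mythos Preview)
Your proposal is correct, and the facet-count half of it is exactly what the paper does: the corollary is stated without proof, and the text immediately following it explains that Lemma~\ref{lem:spm} partitions the plane into eight regions on which $d_h(\cdot,f)$ is affine (refining the three-region case when $\varphi_v<\alpha\le\pi/4$ by the vertical line through $f$ and the two horizontal rays from $e,e'$), which is precisely your argument.

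Where you genuinely diverge is on convexity. The paper gives no explicit argument; convexity is left to be read off from the piecewise-linear description, i.e.\ from checking that the eight affine pieces fit together so that the sublevel set is the intersection of eight halfplanes. Your route via the formula
\[
d_h(p,f)=\inf_{q\in h}\bigl(\|p-q\|_1+\tfrac{1}{v}\|q-f\|_2\bigr)
\]
and partial minimization of a jointly convex function over the convex set $h$ is cleaner: it bypasses the case analysis entirely and shows convexity for any walking norm and any straight freeway, not just the $L_1/L_2$ combination at hand. The paper's implicit approach, on the other hand, has the advantage that it delivers convexity and the facet bound in a single stroke once the eight affine pieces are written down. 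Your anticipated ``obstacle'' in the regime $\varphi_v<\alpha\le\pi/4$ is real but mild: the two outer vertical slabs each contribute two $L_1$ faces of a shifted diamond, and the middle slab contributes four faces (two above and two below $h$), with adjacent faces from different slabs always merging because the tangent slopes match; so the total never exceeds eight.
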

\begin{figure}
\centering
\includegraphics[width=0.80\textwidth]{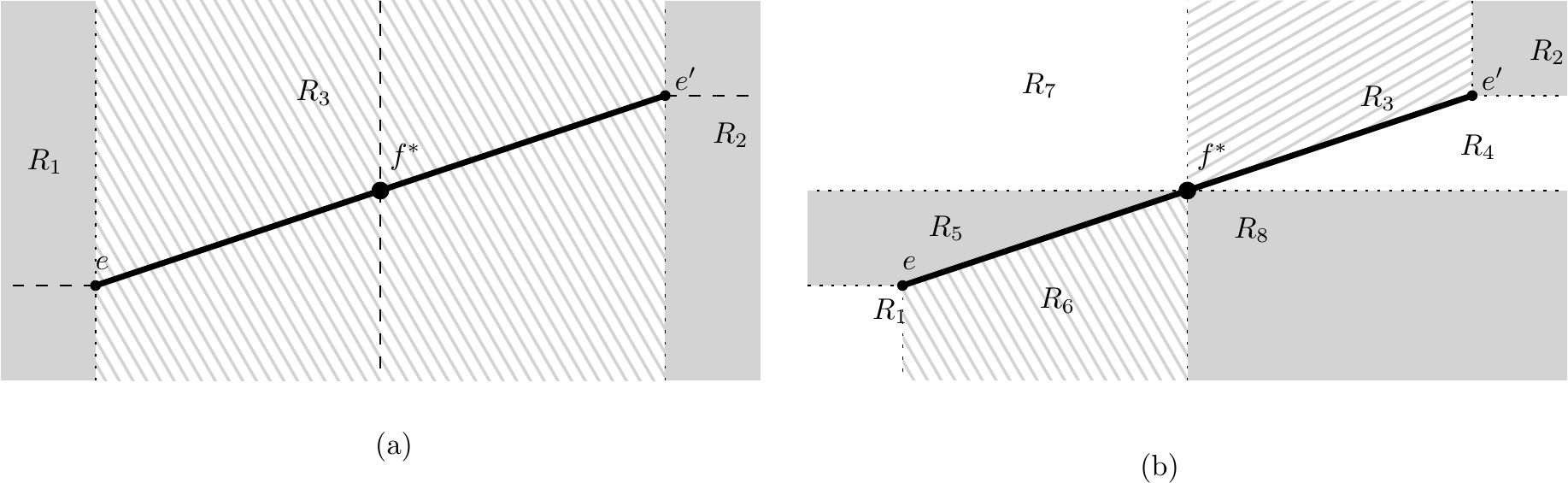}
\caption{Shortest path map of $p$. Case $(a)$ corresponds to the case in which $\varphi_v<\alpha\leq\frac{\pi}{4}$ (and $(b)$ when $0\leq\alpha\leq\varphi_v$). In each of the cases, the space has been partitioned into regions such that the shortest path of any two points in the same region is equivalent. Region $R_i$ corresponds to the case in which the $i$-th rule of Lemma \ref{lem:spm} is used, respectively. The dashed lines in (a) depict the refinement needed so as to certify that the distance function to $f$ is bivariate linear in each region.}
\label{fig_spm}
\end{figure}

Figure \ref{fig_spm} illustrates Lemma \ref{lem:spm}.
Observe that if $0\leq\alpha\leq\varphi_v=\pi/4-\arcsin(\frac{\sqrt{2}}{2v})$, the distance between $f$ and a point in a fixed region $R_i$ ($i\leq 8$) is a bivariate linear function whose coefficients of this function only only depend on $v$ and the slope of $h$. However, the same result does not occur when $\varphi_v<\alpha\leq\frac{\pi}{4}$. This is due to the fact that the partition into regions does not distinguish between the cases in which $p$ is above/below or to the left/right of $f$. For simplicity in the explanation, from now on we further refine the three regions into eight regions so that the distance in each sub-region also is a bivariate piecewise linear function. This can be done by adding a vertical line passing through $f^*$ and two horizontal rays emanating outwards from  $e$ and $e'$, see Figure \ref{fig_spm}(a).

That is, in either case, the plane can be partitioned into eight regions $R_i$ such that, for any $i\leq 8$, the distance between $f$ and a point in $R_i$ is a bivariate linear function. Given a slope $\alpha\in[0,\pi/2]$ and $i\leq 8$, let $a_i,b_i,$ be $x$ and $y$ coefficients of the distance function between the points in region $R_i$ and $f$. Also, let $c_i$ be the additive term of the distance function (that is, $d_h(p,f)=a_ix(p)+b_iy(p)+c_i$ for all $p\in \R_i$).

We extend these functions to the plane and say that $p \in S$ is the $i$-th {\em extreme} point if $p$ maximizes $a_ix(p)+b_iy(p)+c_i$ among all points of $S$ (pick any arbitrarily if many exist). We also define $E_{\alpha}$ as the set of $i$-extremes (for all $i\leq 8$). Observe that $2\leq |E_{\alpha}|\leq 8$. In general we need not have that the $i$-th extreme belongs to the region $R_i$. However, we will show that we can ignore non-extreme points:

\begin{figure}[h]
    \centering
    \includegraphics[width=0.5\textwidth]{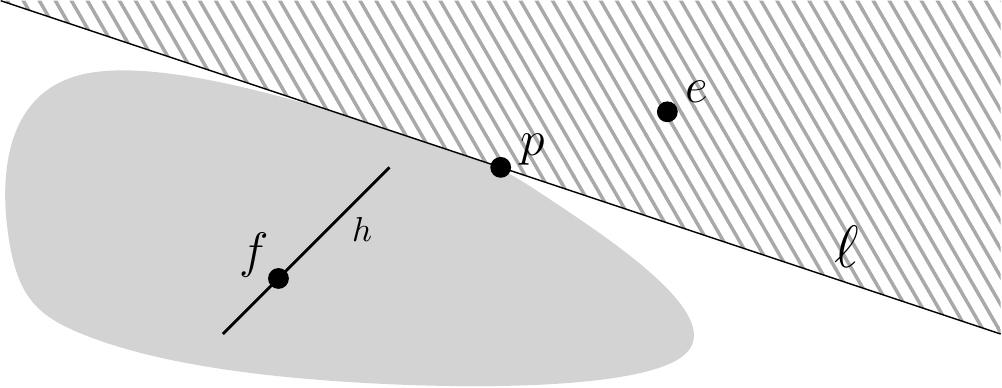}
    \caption{Proof of Lemma \ref{lem_freekey}: point $e$ cannot be contained in the interior of ball $B_h(f,d_h(f,p))$ (depicted in gray) and in the halfplane $\ell^+$ (dashed).}
    \label{fig_convex}
\end{figure}
\begin{lemma}\label{lem_freekey}
Let $(f,h)$ be an optimal solution of the 1C1F-problem for a set $S$ of points, and let ${\alpha}$ be the slope of $h$. The location of $f$ and $h$ only depends on $E_{\alpha}$.
\end{lemma}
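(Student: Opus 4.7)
The plan is to show that for any highway $h$ of slope $\alpha$ and any facility $f\in h$, one has $\max_{p\in S} d_h(p,f) = \max_{p\in E_\alpha} d_h(p,f)$. This identity implies that the objective of the 1C1F problem, restricted to solutions of slope $\alpha$, is the same function of $(f,h)$ whether the input is $S$ or $E_\alpha$, so the optimum is attained at the same $(f,h)$; applying this at the slope $\alpha$ of the optimum for $S$ gives the claim.

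The direction $\max_{E_\alpha}\leq\max_S$ is trivial since $E_\alpha\subseteq S$. For the converse, fix any $p\in S$ and let $R_i$ be the region (with respect to $(f,h)$) containing $p$. By Lemma \ref{lem:spm}, $d_h(p,f)=a_i x(p)+b_i y(p)+c_i$, where $a_i,b_i$ depend only on $\alpha$ and $v$. Since $p_i\in E_\alpha$ is the $i$-th extreme, $a_i x(p)+b_i y(p)\leq a_i x(p_i)+b_i y(p_i)$, which yields $d_h(p,f)\leq a_i x(p_i)+b_i y(p_i)+c_i$. It then remains to show $a_i x(p_i)+b_i y(p_i)+c_i\leq d_h(p_i,f)$. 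This is the geometric content of Figure \ref{fig_convex}: by Corollary \ref{cor_convex}, the $d_h$-ball $B_h(f,d_h(p_i,f))$ is a convex polygon whose $i$-th face lies on the line $a_i x+b_i y+c_i=d_h(p_i,f)$, and convexity forces the entire ball to sit in the half-plane $\{a_i x+b_i y+c_i\leq d_h(p_i,f)\}$; since $p_i$ belongs to that ball, the inequality is immediate. Chaining the bounds gives $d_h(p,f)\leq d_h(p_i,f)$ for some $p_i\in E_\alpha$, which establishes $\max_S\leq\max_{E_\alpha}$.

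The main obstacle is justifying, in the last step, that the half-plane $\{a_i x+b_i y+c_i\leq d_h(p_i,f)\}$ really contains $B_h(f,d_h(p_i,f))$ even in degenerate configurations---for instance when the $i$-th face collapses (because the ball has fewer than eight edges) or when $p_i$ does not lie in region $R_i$, a possibility explicitly allowed by the definition of extreme. In such cases I would argue that the extension of $a_i x+b_i y+c_i$ to the whole plane is globally a lower bound on $d_h(\cdot,f)$; this can be verified by a direct case analysis on the rules of Lemma \ref{lem:spm} or, more compactly, by observing that the convex polygon $B_h(f,R)$ equals the intersection of the (at most) eight half-planes cut out by its supporting lines and letting $R$ vary so that the bounding line through each extreme is reached in turn.
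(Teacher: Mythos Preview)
Your argument is correct and follows essentially the same route as the paper: both exploit the convexity of the $d_h$-ball (Corollary~\ref{cor_convex}) to conclude that the ball lies entirely in the half-plane determined by the $i$-th supporting line, which is exactly the inequality linking an arbitrary point of $S$ to the corresponding extreme. The paper phrases the conclusion as ``$E_\alpha\subseteq B_h(f,R)\Rightarrow S\subseteq B_h(f,R)$'' and argues by contradiction anchored at the non-extreme point $p\in R_i$ (rather than at the extreme $p_i$), which automatically avoids the degeneracy you flag in your last paragraph, since the tangent to the ball at a boundary point $p\in R_i$ is unambiguously the line $a_ix+b_iy=\text{const}$.
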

\begin{proof}
For any $R>0$, let $B_h(f,R)$ denote the ball (with respect to the metric induced with highway $h$) centered at $f$ of radius $R$. In the following we show that if $E_{\alpha} \subseteq B_h(f,R)$ for some $R>0$, we also have $S\subseteq B_h(f,R)$. Thus, if we know the orientation of $h$ beforehand we can disregard all non-extreme points of $S$.
Let $p\in S$ be a non-extreme point, $i$ be the region in which $p$ belongs to, and $e$ be the $i$-th extreme of $S$. 
Assume that there exists some $R>0$ such that $e\in B_h(f,R)$ but $p\not\in B_h(f,R)$. By definition we have $R\leq d_h(f,p)$, so we increase the radius of the ball until $p$ is on the boundary. As a result, $e$ must be in the interior of $B_h(f,d_h(f,p))$. Let $c\geq 0$ be the value such that the line $\ell=\{(x,y)| a_ix+b_iy+c=0\}$ passes through $p$. We also define $\ell^{-}=\{(x,y)| a_ix+b_iy \leq c\}$ as the halfplane containing the points on or below $\ell$, and $\ell^{+}=\{(x,y)| a_ix+b_iy \geq c\}$ as the opposite halfplane. Without loss of generality, we can assume that $f$ is in $\ell^-$. Observe that, since $e$ is the extreme point with respect to $a_i$ and $b_i$, we have $e\in\ell^+$ (otherwise $p$ would be extreme instead).

By Corollary \ref{cor_convex}, the ball $B_h(f,d_h(f,p))$ is convex. More importantly, since $p\in R_i$, the line tangent to $B_h(f,d_h(f,p))$ at $p$ must be  line $\ell$. In particular, the ball $B_h(f,d_h(f,p))$ is contained in the halfplane $\ell^-$. However, we have a contradiction, since $e$ cannot be an interior point to $B_h(f,d_h(f,p))$ and on $\ell^+$ (see Figure \ref{fig_convex}).\qed
\end{proof}

The above observation gives us a method to locate a freeway efficiently:

\begin{theorem}\label{theo_FL1C1F}
Both variants of the 1C1F problem can be solved in $O(n\log n)$ time and $O(n)$ space.
\end{theorem}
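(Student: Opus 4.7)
The plan is to reduce the problem to $O(n)$ constant-size subproblems indexed by the combinatorial type of the extreme set $E_\alpha$. By the symmetry between axes noted just before Lemma \ref{lem:spm} we may restrict attention to $\alpha\in[0,\pi/4]$, and by Lemma \ref{lem_freekey} once $E_\alpha$ is fixed the remaining optimization depends only on at most eight points and on a constant number of continuous parameters (the coordinates of $f$, the angle $\alpha$ inside its current subinterval, the position of $f$ along $h$, and, in the variable-length variant, the length $\ell$).

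First I would compute the convex hull $\Conv(S)$ in $O(n\log n)$ time. For each index $i\in\{1,\dots,8\}$, the $i$-th extreme point is the vertex of $\Conv(S)$ maximizing the linear functional with coefficients $(a_i(\alpha),b_i(\alpha))$. Because this direction depends continuously on $\alpha$ (and changes its closed-form expression only at the single threshold $\alpha=\varphi_v$ from Lemma \ref{lem:spm}), as $\alpha$ sweeps $[0,\pi/4]$ the $i$-th extreme visits at most $O(n)$ hull vertices. Summed over the eight indices, this yields $O(n)$ critical values of $\alpha$ at which $E_\alpha$ changes; these critical values can be generated and merged in sorted order in $O(n\log n)$ time by an angular sweep of $\Conv(S)$ synchronized with the eight direction curves.

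In each maximal subinterval where $E_\alpha$ is constant I would then solve the following constant-size minimax: minimize, over the free parameters, $\max_{p\in E_\alpha}d_h(p,f)$ subject to $f\in h$ and, for the fixed-length variant, to $\|t-t'\|_2=\ell$. Since $|E_\alpha|\leq 8$ and each distance function $d_h(p,\cdot)$ is piecewise bivariate linear with a constant number of pieces (Lemma \ref{lem:spm}), this is a parametric minimax of constant complexity and can be solved in $O(1)$ time by enumerating which subset of constraints is active at the optimum and solving the resulting constant-size algebraic system. Returning the best solution over the $O(n)$ subintervals gives total time $O(n\log n)$ and space $O(n)$, since only $\Conv(S)$ and the best solution seen so far need to be kept.

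The main obstacle I anticipate is twofold: first, handling cleanly the transition at $\alpha=\varphi_v$, where the distance formulae and therefore the direction curves $(a_i,b_i)$ change shape, so that the sweep still produces $O(n)$ events in sorted order; and second, verifying that the constant-size parametric minimax inside each subinterval really does reduce to a bounded number of algebraic cases in both the fixed- and variable-length variants. Both points should follow from a careful case analysis based on the region diagrams of Figure \ref{fig_spm} together with the convexity of the ball $B_h(f,R)$ established in Corollary \ref{cor_convex}.
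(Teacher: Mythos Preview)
Your proposal is correct and follows essentially the same route as the paper's own proof: a rotating-calipers sweep over $\alpha$ using the convex hull, with $O(n)$ combinatorial events at which $E_\alpha$ changes and a constant-size optimization in each interval. The only cosmetic difference is that for the variable-length variant the paper reduces to the fixed-length case by taking $\ell$ sufficiently large (equivalently, locating a line), whereas you keep $\ell$ as an extra free parameter in the constant-size subproblem; both are valid.
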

\begin{proof}
We will solve the problem using a variation of the rotating calipers technique. This technique was already used in highway location problems in \cite{ahn09}. Although the metric considered was slightly simpler, the details are analogous. For simplicity in the explanation, we first consider the fixed length variant.

If the orientation of $h$ is known, we can compute the set $E_m$ and use Lemma \ref{lem_freekey}. Once this set is known, the problem can be solved in constant time. Since we not know the orientation of $h$, we will use rotating calipers. That is, we start with ${\alpha}=0$ (i.e., a horizontal line), rotate the line and keep track of the changes of the set $E_{\alpha}$.

Let $(\alpha_1,\alpha_2)$ be an interval of orientations in which set of $E_{\alpha}$ does not change (i.e., $E_{\alpha}=E_{\alpha'}$, for any ${\alpha},{\alpha'}\in(\alpha,\beta)$). For any such interval, the problem is of constant size, hence can be solved in constant time. In the following, we will show that the region $[0,\pi/2]$ can be partitioned into a small number of intervals in which the set $E_{\alpha}$ does not change.

Observe that, as ${\alpha}$ goes from $0$ to $\pi/2$, the associated coefficients $a_i$ and $b_i$ either remain constant (such as in regions $R_1$ and $R_2$) or are monotone. In particular, a point of $S$ can only be extreme with respect to the $i$-th region in an interval of orientations. That is, any single point of $S$ can generate a constant number of events in which the set $E_{\alpha}$ changes, hence the region $[0,\pi/2]$ will be split into at most $O(n)$ intervals. The most expensive part of the algorithm  (computationally speaking) is computing the convex hull, which takes $O(n\log n)$ time.

Finally we comment the case in which we are allowed to locate a freeway of variable length. Since increasing the length of the freeway can only decrease the travel time between two points, we can also assume that the highway has infinite length (i.e., we are locating a line instead of a segment). Thus, we simply fix a sufficiently large value $\ell$ and then execute the algorithm for the fixed length. \qed
\end{proof}


\begin{corollary}\label{cor_fixedor}
If the highway's orientation is fixed, both variants of the Free-1C1F problem can be solved in $O(n)$ time and $O(n)$ space.
\end{corollary}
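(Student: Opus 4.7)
The plan is to specialise the argument of Theorem~\ref{theo_FL1C1F} to the case of a known orientation, which removes the rotating-calipers bottleneck entirely. First I would observe that once $\alpha$ is fixed, the coefficients $a_i,b_i,c_i$ associated with each of the eight regions of the shortest-path map (Lemma~\ref{lem:spm}, refined as in Figure~\ref{fig_spm}) are themselves fixed constants, depending only on $v$ and $\alpha$. Hence for each $i\leq 8$ the ``$i$-th extreme point'' of $S$ is just the maximiser of a single linear functional $a_i x(p)+b_i y(p)+c_i$ over the input, which can be found by a single linear scan in $O(n)$ time and $O(n)$ space.

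Next, I would apply the result of Lemma~\ref{lem_freekey}: the optimal location of $f$ and $h$ for this given orientation depends only on the set $E_\alpha$, which has at most eight elements. Performing the eight linear scans and storing the eight extreme points still costs $O(n)$ time and $O(n)$ space in total. Once $E_\alpha$ is in hand, the remaining optimisation problem has constant size and can be solved in $O(1)$ time (analogously to the constant-time treatment of the basic problem in Lemma~\ref{lemma:basic-cte}): we are essentially looking for the position of $f$ on the (oriented) line that minimises the maximum $d_h$-distance to a constant number of points.

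For the variable-length variant, I would reuse the reduction given at the end of Theorem~\ref{theo_FL1C1F}: increasing $\ell$ can only decrease travel times, so it suffices to fix a sufficiently large value of $\ell$ and invoke the fixed-length procedure, still in $O(n)$ time and $O(n)$ space. There is no real obstacle here, since fixing $\alpha$ eliminates the only step in the proof of Theorem~\ref{theo_FL1C1F} that was superlinear (the convex hull computation needed to track the changes of $E_\alpha$ as $\alpha$ varied); what remains is a handful of linear-time extreme-point queries followed by a constant-size subproblem, which proves the claim.
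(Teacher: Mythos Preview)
Your proposal is correct and is precisely the argument the paper has in mind: the corollary is stated without proof because it is an immediate specialisation of Theorem~\ref{theo_FL1C1F}, and you have spelled out exactly that specialisation (fixed $\alpha$ $\Rightarrow$ fixed linear functionals $\Rightarrow$ compute $E_\alpha$ by eight linear scans $\Rightarrow$ constant-size residual problem via Lemma~\ref{lem_freekey}). There is nothing to add.
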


We complete the problem by giving a matching lower bound:

\begin{theorem}\label{theo_lowerFL1C1F}
In the algebraic decision tree model both variants of the 1C1F need $\Omega(n\log n)$ time.
\end{theorem}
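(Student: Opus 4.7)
The plan is to prove the lower bound by a reduction from the MAX-GAP problem: given $n$ real numbers $x_1,\ldots,x_n\in[0,1]$, compute $\max_i (x_{(i+1)}-x_{(i)})$, where $x_{(i)}$ denotes the $i$-th smallest input. MAX-GAP is known to have an $\Omega(n\log n)$ lower bound in the algebraic decision tree model, so a linear-time reduction to either variant of the 1C1F problem suffices. Since the theorem asserts the bound for both the fixed-length and variable-length variants, I would aim for a single construction that can be instantiated for either one by choosing the highway length appropriately.

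First, I would build a 1C1F instance from the MAX-GAP input in linear time: let $S$ consist of the points $p_i=(x_i,0)$ for $i=1,\ldots,n$ together with a pair of \emph{sentinel} points placed far to the left and far to the right on the $x$-axis. The sentinels are chosen so that, for any admissible speed $v>1$, the optimal freeway orientation must be horizontal---otherwise the sentinels cannot be served within the achievable radius.

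Second, I would calibrate the parameters (the speed $v$ and, in the fixed-length variant, the length $\ell$) so that, in the optimal 1C1F solution, the freeway must lie on the $x$-axis and be positioned to cover the largest gap among the $x_i$. The essential intuition is that the freeway ``absorbs'' walking time in its interval, so placing it over the widest gap is exactly what minimizes the worst-case travel time towards the two extreme sentinels. Under such a calibration, the optimal radius $R^*$ becomes a strictly monotonic function of the MAX-GAP value, so that the MAX-GAP answer can be recovered from $R^*$ in $O(1)$ time. If the construction goes through, any $o(n\log n)$-time algorithm for either variant of 1C1F would yield an $o(n\log n)$-time algorithm for MAX-GAP, contradicting the known lower bound.

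The main obstacle is justifying the structural claim that the optimum of the constructed 1C1F instance forces the freeway onto the $x$-axis and over the widest input gap. I expect this to require a short case analysis based on the shortest-path characterization of Lemma~\ref{lem:spm} and the convexity of the freeway balls from Corollary~\ref{cor_convex}: the sentinels force the ball of radius $R^*$ centred at $f$ to have very large $x$-extent, which combined with convexity rules out tilted or off-axis freeways; and, once the orientation is fixed, a direct computation of the worst-case travel time as a function of the freeway's position on the line shows that the minimum is attained when the freeway is centered on the widest gap. Handling the boundary cases cleanly for both fixed and variable $\ell$ is the principal technical work, but once done the reduction yields the desired $\Omega(n\log n)$ lower bound for both variants.
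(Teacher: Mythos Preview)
Your reduction from \textsc{Max-Gap} has a fundamental flaw: in a min--max (1-center) objective, the optimal radius is governed by the \emph{extent} of the point set along and across the freeway direction, not by the internal gaps between consecutive points. Concretely, take your construction with sentinels at $\pm M$ and the remaining points $x_i\in[0,1]$ on the $x$-axis. Once the freeway is forced to be horizontal, the worst-case travel time is attained at the two sentinels; the points $x_i$ are all strictly closer to $f$ than the sentinels regardless of how they are distributed in $[0,1]$. Hence the optimal placement of $f$ and of the freeway interval is determined solely by balancing the two sentinels, and the value $R^*$ is completely insensitive to the gap structure among the $x_i$. Your stated intuition---that putting the freeway over the widest gap helps serve the sentinels---is the wrong way round: a freeway segment reduces travel time by lying \emph{between} $f$ and a far point, not by covering an empty interval where no demand exists. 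So the ``main obstacle'' you identify is not a technicality to be handled by case analysis; the structural claim itself is false, and no calibration of $v$ or $\ell$ will make $R^*$ encode the maximum gap.

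The paper takes a much shorter route: it observes that in the degenerate regime $\ell=v=\infty$ the 1C1F problem collapses to computing the minimum-width enclosing strip (the freeway becomes a line with zero travel cost, so the radius is half the width orthogonal to it). That problem already carries an $\Omega(n\log n)$ algebraic-decision-tree lower bound, and the reduction is immediate. If you want to repair your approach, you would need to reduce from a problem whose hardness is tied to \emph{orientation} or \emph{extent} rather than to gap structure; \textsc{Max-Gap} is simply the wrong source problem for a min--max location objective.
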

\begin{proof}
Proof of this claim follows from the fact that, in the particular case in which $\ell=v=\infty$, the solution to either variant is equivalent to finding the strip of minimum width that contains a given set of points. Since this problem is known to need $\Omega(n\log n)$ time \cite{lw-gcslp-86}, the same fact holds for the 1C1F problem. \qed
\end{proof}

\begin{center}
\begin{tabular}{|r|r|c|c|}
\hline
& length &  & $\max$ \\
\hline
\multirow{2}{24mm}{Turnpike} & fixed &  & $O(n^2)$ \\
\cline{2-4}
& free &  & $O(n\log n)$ \\
\cline{1-4}
\multirow{2}{24mm}{Freeway}& fixed &  & $\Theta(n\log n)$\\
\cline{2-4}
& free & & $\Theta(n\log n)$ \\\hline
\end{tabular}
\end{center}

\section{Concluding Remarks}\label{conclu}
In this paper we have considered several variants of the facility location problem introduced in~\cite{espejo11}.
Two models for the minmax criterion have been addressed,  the {\em turnpike} case, in which we only allow entering and leaving the highway at its
endpoints, and the  {\em freeway} case, in which one is allowed to enter and leave at any
point. All our results are summarized in Table 1.

The main open problem is to know whether or not the FL-1C1T variant can be solved in $o(n^2)$ time. One would think that this problem is similar to  {\em Rectilinear 2-center}~\cite{sergey,drezner87}. Unfortunately, the crucial property for solving the Rectilinear 2-center in
linear time (both covering boxes are contained in
the smallest enclosing axis-parallel rectangle of the
points) is not always true in our case. Refer to Figure~\ref{fig:counterexample}. A similar example can be done to show that the FL-1C1T-problem is not of LP-Type.

\begin{figure}[htb]
    \centering
    \includegraphics[width=0.35\textwidth]{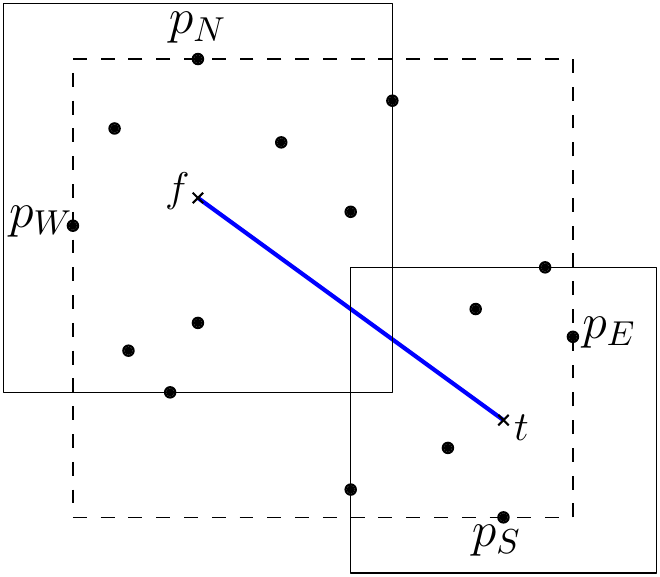}
    \caption{\small{Neither of the two squares is contained in the smallest enclosing
axis-parallel rectangle, and no extreme point lies on the boundary of their union.}}
    \label{fig:counterexample}
\end{figure}

\bibliography{carreteras}{}
\bibliographystyle{plain}
\end{document}